\algnewcommand{\Initialize}{%
  \State \textbf{Initialize:}
}
\algnewcommand{\Output}{%
  \State \textbf{Output:}
}
\newtheorem{theorem}{Theorem}
\newtheorem{lemma}{Lemma}
\newtheorem{proposition}{Proposition}
\newtheorem{definition}{Definition}
\newtheorem{remark}{Remark}
\pgfplotsset{compat=newest}
\crefname{paragraph}{paragraph}{paragraphs}
\Crefname{paragraph}{Paragraph}{Paragraphs}
\pgfplotsset{compat=1.18}
\newcommand{\F}{\mathbb{F}}
\newcommand{\6}{\mathbf}
\newcommand{\vidx}{i}
\newcommand{\bigO}[1]{\mathcal{O}(#1)}
\newcommand{\getrand}{\xleftarrow{\smash{\raisebox{-0.4ex}{\tiny$\mathsf{R}$}}}}
\newcommand{\Prob}[1]{\textup{Pr}\mleft[#1\mright]}
\newcommand{\Bern}[1]{\mathrm{Bern}(#1)}
\newcommand{\Dset}{\mathcal{D}}
\newcommand{\DistLPN}{\mathtt{LPN}}
\newcommand{\LPN}{\mathsf{LPN}}
\newcommand{\plpn}{\tau} 
\renewcommand{\k}{k} 
\newcommand{\n}{n} 
\newcommand{\q}{{q}} 
\newcommand{\DistHintLPN}{\mathtt{HintLPN}}
\newcommand{\HintLPN}{\mathsf{HintLPN}}
\newcommand{\p}{p}
\newcommand{\pe}{p_e} 
\newcommand{\pf}{p_f} 
\newcommand{\Iset}{\mathcal{I}}
\newcommand{\mRAND}{\mathtt{RAND}}
\newcommand{\mkeygen}{\mathtt{KeyGen}}
\newcommand{\mencrypt}{\mathtt{Encrypt}}
\newcommand{\mdecrypt}{\mathtt{Decrypt}}
\newcommand{\mKAHE}{\mathtt{KAHE}}
\newcommand{\KAHE}{\texttt{KAHE}\xspace}
\newcommand{\C}{\mathcal{C}}
\newcommand{\ct}{\mathbf{y}}
\newcommand{\pKAHE}{p}
\newcommand{\nC}{{n}}
\newcommand{\kC}{{k_\C}}
\newcommand{\qentropy}[1]{H_{\q}(#1)}
\newcommand{\define}{\triangleq}
\newcommand{\dec}{\mathsf{dec}}
\newcommand{\enc}{\mathsf{enc}}
\newcommand{\cw}{\6c}
\newcommand{\rate}{R}
\newcommand{\uidx}{j}
\newcommand{\uinput}[1][\uidx]{\mathbf{m}_{#1}}
\newcommand{\uinputagg}[1][\uidx]{\bar{\mathbf{m}}}
\newcommand{\qints}{\Lambda}
\newcommand{\Ncm}{M}
\newcommand{\cmset}{\mathcal{M}}
\newcommand{\cmidx}{\ell}
\newcommand{\sskeyidx}{\tau}
\newcommand{\ncrt}{\alpha}
\newcommand{\crtidx}{t}
\newcommand{\crtmod}[1][\crtidx]{\q_{#1}}
\newcommand{\ukey}[1][\uidx]{\6s_{#1}}
\newcommand{\ukeyagg}[1][\crtidx]{\bar{\6s}}
\newcommand{\ukeypart}[1][\pidx]{\6s_{\uidx, #1}}
\newcommand{\sskey}[1][\sskeyidx]{\mathbf{k}_{\uidx}}
\newcommand{\sskeypart}[1][\pidx]{\mathbf{k}_{\uidx, #1}}
\newcommand{\Z}{\mathbb{Z}}
\newcommand{\cmskey}[1][\cmidx]{\6s_{#1}}
\newcommand{\cmpkey}[1][\cmidx]{\6p_{#1}}
\newcommand{\dskey}[1][\cmidx]{\6s\6k}
\newcommand{\dpkey}[1][\cmidx]{\6p\6k}
\newcommand{\ctd}[1][\cmidx]{\6c\6t_{\uidx}}
\newcommand{\ctcm}[1][\cmidx]{\6c\6t_{\uidx, #1}}
\newcommand{\ctcrt}[1][\uidx]{\ct_{#1}}
\newcommand{\ctagg}[1][\crtidx]{\bar{\ct}}
\newcommand{\Ncmcrt}[1][\crtidx]{M_{#1}}
\newcommand{\cmcrt}[1][\crtidx]{\mathcal{M}_{#1}}
\newcommand{\poly}[1][\uidx]{f_{#1}}
\newcommand{\eval}[1][\cmidx]{\alpha_{#1}}
\newcommand{\evalcrt}[1][\cmidx]{\alpha_{#1, \crtidx}}
\newcommand{\share}{\poly(\eval)}
\newcommand{\keyfunc}[1][\ukey]{f(#1)}
\newcommand{\keyfuncinv}[1][\ukey]{f^{-1}}
\newcommand{\colluser}{\mathcal{Z}} 
\newcommand{\uinputcrt}[1][\uidx]{\mathbf{m}_{#1, \crtidx}}
\newcommand{\uinputcrtagg}[1][\crtidx]{\bar{\mathbf{m}}_{#1}}
\newcommand{\Hyb}[1]{\mathtt{H}^{(#1)}}
\newcommand{\hidx}{\ell}
\newcommand{\advA}{\mathcal{A}}
\newcommand{\advB}{\mathcal{B}}
\pgfplotsset{
  N10/.style={
    line width=0.9pt,
  },
  N100/.style={
    line width=1pt,
    densely dotted,
  },
  N50/.style={
    line width=0.9pt,
    dashed,
  },
  solver/.style={
    color = NavyBlue,
    line width=0.9pt,
  },
  crt/.style={
    color = RedOrange,
    line width=0.9pt,
  },
  swift/.style={
    color = Gray,
    line width=0.9pt,
  },
  N10solver/.style={
    N10, solver,
  },
  N10crt/.style={
    N10, crt,
  },
  N10swift/.style={
    N10, swift,
  },
  N100solver/.style={
    N100, solver,
  },
  N100crt/.style={
    N100, crt,
  },
  N100nocrt/.style={
    N100, nocrt,
  },
  N100swift/.style={
    N100, swift,
  },
  N50solver/.style={
    N50, solver,
  },
  N50swift/.style={
    N50, swift,
  },
   N50crt/.style={
    N50, crt,
  },
}
\newcommand{\Nuser}{N} 
\newcommand{\coll}{z} 
\newcommand{\kinfo}{k} 
\colorlet{q2}{NavyBlue}
\colorlet{q7}{BrickRed}
\colorlet{q31}{ForestGreen}
\colorlet{q127}{Purple}
\begin{document}
\title{Post-Quantum Secure Aggregation via \\Code-Based Homomorphic Encryption \vspace{-.2cm}}

\author{%
    \IEEEauthorblockN{Sebastian Bitzer, Maximilian Egger, Mumin Liu$^\star$ and Antonia Wachter-Zeh \\
    Technical University of Munich, Munich, Germany,\\ \{sebastian.bitzer, maximilian.egger, antonia.wachter-zeh\}@tum.de, liumumin2@gmail.com \vspace{-.6cm}
                    } 
    \thanks{This project has received funding from Agentur für Innovation in der Cybersicherheit GmbH. $^\star$Work done as part of his Master's Thesis at TUM.}
}

\maketitle

\begin{abstract} 
Secure aggregation enables aggregation of inputs from multiple parties without revealing individual contributions to the server or other clients. 
Existing post-quantum approaches based on homomorphic encryption offer practical efficiency but predominantly rely on lattice-based hardness assumptions. 
We present a code-based alternative for secure aggregation by instantiating a general framework based on key- and message-additive homomorphic encryption under the Learning Parity with Noise (LPN) assumption. Our construction employs a committee-based decryptor realized via secret sharing and incorporates a Chinese Remainder Theorem (CRT)–based optimization to reduce the communication costs of LPN-based instantiations. We analyze the security of the proposed scheme under a new Hint-LPN assumption and show that it is equivalent to standard LPN for suitable parameters. Finally, we evaluate performance and identify regimes in which our approach outperforms information-theoretically secure aggregation protocols.
\end{abstract}

\section{Introduction}

\textbf{Secure aggregation} allows a server to obtain an aggregate of individual contributions without revealing the individual contributions to non-authorized parties. Such mechanisms are particularly important in federated learning (FL), where model updates from participating users must be aggregated without compromising their privacy. Secure aggregation was introduced to FL based on a pairwise shared-seed concept \cite{bonawitz2017practical}.

Broadly, secure aggregation methods can be categorized into two classes: information-theoretic approaches (cf.\cite{zhao2022information,jahani2022swiftagg+,egger2025private}), which rely on bounded collusion assumptions, and cryptographic approaches (cf.\cite{bonawitz2017practical,bell2020secure,kadhe2020fastsecagg,so2021turbo,so2022lightsecagg}), which relax collusion assumptions at the cost of relying on bounded computational power of adversaries. These methods are often based on secret sharing techniques \cite{shamir1979how,blakley1979safeguarding,mceliece1981sharing} and multi-party computation \cite{cramer2015secure}.

Since bounded collusion assumptions are often difficult to justify, and information-theoretic methods incur substantial communication overhead for strong collusion tolerance, cryptographic approaches emerge as particularly promising. In this line of work, many secure aggregation protocols leverage additively homomorphic encryption primitives, such as Paillier encryption~\cite{paillier1999public}. 
However, the advent of quantum computing threatens classical cryptographic primitives, motivating the need for practical post-quantum–resilient protocols.

\textbf{Post-quantum.} Lattice-based hardness assumptions, notably Learning with Errors (LWE), were first applied to secure aggregation with input validation in \cite{bell2023acorn}. 
Many subsequent works consider multi-server settings (e.g., \cite{nguyen2025mario}); we focus on single-server settings, which are particularly relevant for FL.
Recently, OPA~\cite{C:KarPol25} and Willow~\cite{C:BGLRS25}, introduced lattice-based approaches for communication-efficient, computationally private, post-quantum–resilient one-shot aggregation. 
Willow~\cite{C:BGLRS25} relies on LWE-based key- and message-additive homomorphic encryption, while OPA employs a seed-homomorphic pseudorandom generator, conceptually close to Willow. 
Building on these ideas, secure aggregation with input soundness was studied in~\cite{madathil2025tacita} and asynchronous FL settings were further explored in \cite{taiello2025buffalo}.
To support crypto-agility and guard against advances in lattice cryptanalysis, considering alternative post-quantum hardness assumptions is essential.

\textbf{Code-based.}
We demonstrate that secure aggregation constructions can alternatively be instantiated under coding-theoretic hardness assumptions, namely the Learning Parity with Noise (LPN) problem~\cite{C:BFKL93}. 
In essence, the LPN assumption asserts that decoding a random linear code in the presence of noise of small Hamming weight is computationally infeasible. 
LPN is plausibly post-quantum secure, has been widely studied, and serves as a building block in a broad range of cryptographic constructions; see~\cite{pietrzak2012cryptography} for an overview.
Nevertheless, compared to LWE, constructing homomorphic encryption schemes from LPN has proven challenging. 
Only recently, the first somewhat homomorphic construction based on an LPN variant was proposed in~\cite{EC:CHKV25}.

\textbf{Contribution.} 
In \Cref{sec:framework}, we present a general framework for secure aggregation that builds on the ideas of~\cite{C:KarPol25,C:BGLRS25}. The protocol is centered around a key- and message-additive homomorphic encryption scheme, for which we propose a novel code-based instantiation in \Cref{sec:lpn}. We further introduce a Chinese Remainder Theorem (CRT)–based optimization that decomposes aggregation across multiple smaller moduli, reducing the costs of LPN-based instantiations and potentially benefiting other secure aggregation constructions.

The security of our construction relies on the \emph{Hint-LPN} assumption, a previously unstudied variant of LPN analogous to the Hint-LWE assumption~\cite{C:KLSS23b}. 
We provide a detailed security analysis in \Cref{sec:lpn} and show that, for suitable parameters, Hint-LPN is equivalent to standard LPN. This result may be of independent interest for future constructions and for analyzing side-channel leakage in code-based cryptography.

Finally, in \Cref{sec:performance} we evaluate the framework's performance for concrete parameter choices. 
While LPN-based instantiations incur higher communication costs than LWE-based schemes, we identify parameter regimes in which our approach outperforms information-theoretically secure protocols. 
A SageMath tool for parameter optimization is available~\cite{code}.

\section{Preliminaries}\label{sec:prelim}

\textbf{Notation.} Sets are denoted by calligraphic letters. Vectors and matrices are denoted by bold lowercase and uppercase letters. For a positive integer $a > 0$, we write $[a] \define \{1, \ldots, a\}$. 
For prime $\q$, let $\F_\q$ denote the finite field of cardinality $\q$.

\textbf{Probabilities.}
We write $x \getrand \mathcal{D}$ to denote sampling $x$ according to a distribution $\mathcal{D}$, and $x \getrand \mathcal{S}$ for uniform sampling from a set $\mathcal{S}$.
$\Bern{p}$ denotes the Bernoulli distribution over $\F_\q$ with success probability $p$, for which 
\[
\Prob{X = x} = 
\begin{cases}
1- p, & \text{for } v= 0,\\
\frac{p}{q-1}, & \text{otherwise}.
\end{cases}
\]
We recall the $\q$-ary piling-up lemma below; its proof is fully analogous to the binary case (see, e.g., \cite{C:EssKubMay17}).
\begin{lemma}\label{lem:piling}
Let $X_i \sim \Bern{p_i}$.
Then, $\sum_{i\in[m]} X_i \in \F_\q$ 
follows a $\q$-ary Bernoulli distribution with success probability
\[
\bigoplus_{i\in[m]} p_i \coloneqq \frac{q-1}{q} - \frac{q-1}{q}\prod_{i\in[m]} \left(1-\frac{q}{q-1}\cdot p_i\right).
\]
\end{lemma}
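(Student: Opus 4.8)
The plan is to reduce to the case $m=2$ and then run a short direct computation on the convolution of two $\q$-ary Bernoulli distributions, exactly as in the binary case. The reduction rests on the observation that the substitution $g(\p) \define 1 - \tfrac{\q}{\q-1}\p$ linearizes the operation $\bigoplus$: expanding the definition of the $m=2$ instance of the displayed operation, which we denote $\p_1\oplus\p_2$, gives immediately $g(\p_1\oplus\p_2) = g(\p_1)\,g(\p_2)$, hence $g\!\left(\bigoplus_{i\in[m]}\p_i\right) = \prod_{i\in[m]} g(\p_i)$, so the $m$-fold quantity is just the iterate of the binary one. Combined with the associativity of addition in $\F_\q$ and the independence of the $X_i$, it therefore suffices to prove that for independent $X_1 \sim \Bern{\p_1}$, $X_2 \sim \Bern{\p_2}$ one has $X_1 + X_2 \sim \Bern{\p_1 \oplus \p_2}$, and then induct on $m$.

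For the base case I would compute $\Prob{X_1 + X_2 = v}$ for each $v\in\F_\q$ by conditioning on the value of $X_1$ and splitting according to whether $X_1 = 0$, $X_1 = v$, or $X_1\notin\{0,v\}$. For $v\neq 0$ this yields
\[
\Prob{X_1 + X_2 = v} = (1-\p_1)\tfrac{\p_2}{\q-1} + (1-\p_2)\tfrac{\p_1}{\q-1} + (\q-2)\tfrac{\p_1}{\q-1}\tfrac{\p_2}{\q-1},
\]
where the key point is that exactly $\q-2$ field elements are simultaneously nonzero and distinct from $v$ when $v\neq 0$; in particular the value is the same for every nonzero $v$. Hence $X_1+X_2$ is constant on $\F_\q\setminus\{0\}$, which is precisely the defining shape of a $\q$-ary Bernoulli variable, and its parameter is $\q-1$ times the displayed probability. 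Simplifying, this parameter equals $\p_1 + \p_2 - \tfrac{\q}{\q-1}\p_1\p_2$, and expanding $\p_1\oplus\p_2$ from the displayed definition gives the same expression; the case $v=0$ then follows automatically from $\sum_{v}\Prob{X_1+X_2=v}=1$. Plugging this into the induction, with $g(\bigoplus_i\p_i)=\prod_i g(\p_i)$ translating the parameter back, completes the argument.

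I do not expect a genuine obstacle here — as noted in the excerpt, the proof is fully analogous to the binary case. The only step requiring a little care is the counting argument that makes the convolution flat on $\F_\q\setminus\{0\}$: the cases $v=0$ and $v\neq 0$ must be separated, since for $v=0$ all $\q-1$ nonzero elements $x$ contribute a term $\Prob{X_1=x}\Prob{X_2=-x}$, whereas for $v\neq 0$ only $\q-2$ of them do. An alternative, arguably cleaner route avoids this case split by passing to the additive characters of $\F_\q$, say $\chi_a(x)=\omega^{ax}$ with $\omega=e^{2\pi i/\q}$: using $\sum_{x\neq 0}\omega^{ax}=-1$ for $a\neq 0$, the Fourier coefficient of $\Bern{\p}$ is $1$ at frequency $0$ and $1-\tfrac{\q\p}{\q-1}$ at every nonzero frequency; the transform of a sum of independent variables is the product of transforms, so $\sum_{i\in[m]}X_i$ has coefficient $1$ at $0$ and $\prod_i\bigl(1-\tfrac{\q\p_i}{\q-1}\bigr)$ elsewhere, and Fourier inversion shows this profile is realized exactly by $\Bern{\p^\ast}$ with $1-\tfrac{\q\p^\ast}{\q-1}=\prod_i\bigl(1-\tfrac{\q}{\q-1}\p_i\bigr)$, i.e. $\p^\ast=\bigoplus_{i\in[m]}\p_i$.
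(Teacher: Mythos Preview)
Your proposal is correct; the paper does not actually supply a proof of this lemma beyond noting that it is ``fully analogous to the binary case'' and citing \cite{C:EssKubMay17}. Your convolution-plus-induction argument is exactly the standard analogue being alluded to, and the optional character-theoretic route is a clean equivalent.
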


\textbf{Coding theory.}
A $\q$-ary linear code $\C$ of length $\nC$ and dimension $\kC$ is a $\kC$-dimensional linear subspace of $\F_\q^\nC$ and referred to as an $[\nC,\kC]$-code.
We associate $\C$ with an encoder and a deterministic decoding algorithm, which are denoted by
\[
\C.\enc\colon \F_\q^\kC \to \C \text{\hspace{.6cm}and\hspace{.6cm}}
\C.\dec\colon \F_\q^\nC \to \F_\q^{\kC}.
\]

\textbf{Code-based cryptography} relies on the hardness of decoding a random linear code.
In this work, we phrase this plausibly post-quantum-secure assumption as the LPN problem.
Note that we use a $\q$-ary variant of LPN, which we formally define in the following and that has been used for instance in~\cite{AC:CouZar22}, .
\begin{definition}[Learning Parity with Noise]
Denote by $\k$ by the secret dimension, by $\n$ the sample dimension, and by  $\plpn \leq \tfrac{q-1}{q}$ the noise rate.
The LPN distribution is defined as 
\[
\DistLPN_{\k,\tau} \define \left\{
(\6A, \6s\6A + \6e): 
\6e \getrand \Bern{\plpn}^{\n}, \6A \getrand \F^{\k\times\n}_\q
\right\}. 
\]
The $\LPN_{\k,\tau}$ assumption asserts that $\DistLPN_{\k,\tau}$ is computationally indistinguishable from
$
\left\{
(\6A,\6u): 
\6u \getrand \F^{\n}_\q, \6A \getrand \F^{\k\times\n}_\q
\right\}.
$
\end{definition}
The concrete hardness of LPN has been extensively studied in the literature.
The BKW algorithm~\cite{STOC:BluKalWas00} achieves an asymptotic complexity of $2^{\bigO{\k/\log\k}}$, when the number of samples is not restricted.
When the number of samples is bounded or the noise rate $\plpn$ is sufficiently small, information set decoders are superior.
In particular, for small $\plpn$ used in this work, we rely on the Gauss algorithm (Prange), which has cost $(1-\plpn)^{-\k}$, see, e.g.,~\cite{C:EssKubMay17} for a detailed discussion.

\section{Private Aggregation Framework}\label{sec:framework}

We study the following aggregation framework, inspired by the recent protocols OPA~\cite{C:KarPol25} and Willow~\cite{C:BGLRS25}. The system consists of a single server, a set of $\Nuser$ users indexed by $\uidx \in [\Nuser]$, and a decryptor. The decryptor is instantiated either by a committee of multiple parties, such as a subset of the users, or by a single trusted auxiliary server.

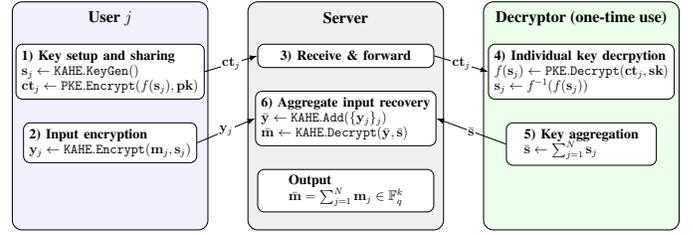
\begin{figure}[t]
    \centering
    \resizebox{\linewidth}{!}{

\begin{tikzpicture}[
  font=\large,
  >=Latex,
  node distance=10mm and 22mm,
  outer/.style={draw, rounded corners=3mm, thick, inner sep=7pt, align=center},
  proc/.style={draw, rounded corners=2mm, thick, fill=white, inner sep=5pt, align=left, minimum width=46mm},
  client/.style={outer, fill=blue!6, minimum width=60mm, minimum height=70mm},
  server/.style={outer, fill=gray!12, minimum width=60mm, minimum height=70mm},
  dec/.style={outer, fill=green!8, minimum width=60mm, minimum height=70mm},
  arr/.style={->, thick},
  lab/.style={midway, fill=white, inner sep=1.2pt}
]

\node[client] (C) {};
\node[font=\bfseries] at ($(C.north)+(0,-5mm)$) {\Large User $\uidx$};

\node[proc, anchor=north] (C1) at ($(C.north)+(0,-13mm)$)
{\textbf{\ref{step:keys} Key setup and sharing}\\
$\ukey \leftarrow \mathtt{KAHE.KeyGen}()$\\
$\ctd \leftarrow \mathtt{PKE.Encrypt}(\keyfunc,\dpkey)$};

\node[proc, anchor=north] (C2) at ($(C1.south)+(0,-7mm)$)
{\textbf{\ref{step:enc} Input encryption}\\
$\ctcrt \leftarrow \mathtt{KAHE.Encrypt}(\uinput,\ukey)$};


\node[server, right=12mm of C] (S) {};
\node[font=\bfseries] at ($(S.north)+(0,-5mm)$) {\Large Server};

\node[proc, anchor=north, minimum width=54mm] (S1) at ($(S.north)+(0,-13mm)$)
{\textbf{\ref{step:transmission} Receive \& forward}};

\node[proc, anchor=north, minimum width=54mm] (S2) at ($(S1.south)+(0,-7mm)$)
{\textbf{\ref{step:inputrec} Aggregate input recovery}\\
$\ctagg \leftarrow \mathtt{KAHE.Add}(\{\ctcrt\}_{\uidx})$\\
$\uinputagg \leftarrow \mathtt{KAHE.Decrypt}(\ctagg,\ukeyagg)$};

\node[proc, anchor=north, minimum width=54mm] (S3) at ($(S2.south)+(0,-7mm)$)
{\textbf{Output}\\
$\uinputagg=\sum_{\uidx=1}^{\Nuser}\uinput \in \F_{\q}^\kinfo$};

\node[dec, right=12mm of S] (D) {};
\node[font=\bfseries] at ($(D.north)+(0,-5mm)$) {\Large Decryptor (one-time use)};

\node[proc, anchor=north] (D1) at ($(D.north)+(0,-13mm)$)
{\textbf{\ref{step:keyrec} Individual key decrpytion}\\
$\keyfunc \leftarrow \mathtt{PKE.Decrypt}(\ctd,\dskey)$\\
$\ukey \leftarrow \keyfuncinv(\keyfunc)$}; 

\node[proc, anchor=north] (D2) at ($(D1.south)+(0,-7mm)$)
{\textbf{\ref{step:keyagg} Key aggregation}\\
$\ukeyagg \leftarrow \sum_{\uidx=1}^{\Nuser}\ukey$};

\draw[arr] (C1.east) -- node[lab] {$\ctd$} (S1.west);
\draw[arr] (C2.east) -- node[lab] {$\ctcrt$} (S2.west);
\draw[arr] (S1.east) -- node[lab] {$\ctd$} (D1.west);
\draw[arr] (D2.west) -- node[lab] {$\ukeyagg$} (S2.east);


\end{tikzpicture}}
    \caption{\small Schematic of the protocol described in \cref{subsec:protocol}. User inputs may be encrypted in blocks to support multiple aggregation rounds; key aggregation at the decryptor is required only once during an initialization phase. The decryptor can be instantiated by a committee of users via secret sharing, avoiding reliance on a single trusted entity (cf. \cref{subsec:committee}). The Chinese Remainder Theorem (CRT) can be used to reduce the field size by executing the protocol repeatedly over smaller moduli (cf. \cref{subsec:crt}). \vspace{-.55cm}}
    \label{fig:schematic}
\end{figure}

Each user $\uidx$ holds a $\kinfo$-dimensional private input $\uinput$, whose entries are quantized to $\qints$ levels of integer precision. 
We focus on integer addition and embed the inputs $\uinput$ into $\F_\q^{\kinfo}$ of cardinality $\q > \Nuser(\qints - 1)$, ensuring that $\qints < \frac{\q}{\Nuser} + 1$ and thereby preventing overflows during integer addition modulo~$\q$. 
Looking ahead to the CRT-based construction in \cref{subsec:crt}, we later replace this field with a ring of cardinality $\q > \Nuser(\qints - 1)$. These inputs, such as user gradients or model updates in federated learning, are aggregated across users. The server should learn only the aggregate
$
\uinputagg \define \sum_{\uidx=1}^{\Nuser} \uinput \in \F_\q^{\kinfo},
$
and no information about the individual contributions $\{\uinput\}_{\uidx \in [\Nuser]}$ beyond their sum. 
Similarly, no set of colluding clients should learn anything about the inputs of non-colluding clients beyond what is implied by their own inputs. 
We employ a symmetric Key- and message-Additive Homomorphic Encryption (KAHE) scheme, instantiated in \cref{sec:lpn}.

\begin{definition}[Key- \& message-homomorphic enc. ($\mKAHE$)]
\hphantom{i}
\vspace{-0.06cm}
\begin{itemize}
\item $\mathtt{KAHE.KeyGen}() \rightarrow$ secret key $\ukey$, for each $\uidx$.
\item $\mathtt{KAHE.Encrypt}(\uinput, \ukey) \rightarrow$ ciphertext $\ctcrt$.
\item $\mathtt{KAHE.Add}(\{\ctcrt\}_{\uidx \in [\Nuser]}) \rightarrow$ aggregated ciphertext $\ctagg$.
\item $\mathtt{KAHE.Decrypt}(\ctagg, \ukeyagg) \rightarrow$ aggregate $\uinputagg$.
\end{itemize}
\end{definition}

To securely transmit secret keys to the decryptor, we assume the a public-key encryption ($\mathtt{PKE}$) infrastructure.

\begin{definition}[Public-key encryption ($\mathtt{PKE}$)]
\hphantom{i}
\vspace{-0.06cm}
\begin{itemize}
\item $\mathtt{PKE.KeyGen}() \rightarrow$ private/public key pair $(\dskey, \dpkey)$.
\item $\mathtt{PKE.Encrypt}(\ukey, \dpkey) \rightarrow$ ciphertext $\ctcm$.
\item $\mathtt{PKE.Decrypt}(\ctcm, \dskey) \rightarrow$ secret key $\ukey$.
\end{itemize}
\end{definition}

\begin{definition}[Secure aggregation under collusion]
An aggregation protocol is said to be \emph{secure under collusion} if the following conditions hold.

\begin{enumerate}
    \item \textbf{User collusion resistance.}
    For any set of $\coll < \Nuser-1$ users $\colluser \subset [\Nuser]$ colluding with each other and the federator, their joint view reveals no information about the inputs of non-colluding users beyond what is implied by the colluding users' inputs and the aggregate value.
    
    \item \textbf{Security of $\mathtt{KAHE}$.}
    $\mathtt{KAHE}$ is semantically secure
    under leakage of $\sum_{\uidx=1}^{\Nuser} \ukey$ and $\{(\uinput, \ukey)\}_{\uidx \in \colluser}.$
    
    \item \textbf{Security of $\mathtt{PKE}$.}
    The $\mathtt{PKE}$ is semantically secure.
\end{enumerate}
\end{definition}

\vspace{-.25cm}

\subsection{Secure Aggregation with $\mKAHE$} \label{subsec:protocol}

In the following, we describe the protocol, see also \Cref{fig:schematic}.

\begin{enumerate}[label={\arabic*)}, labelindent=0pt, labelwidth=*, leftmargin=0pt, align=left, itemsep=1.2ex]

\item \textbf{Key setup and sharing:} \label{step:keys}
Each user $\uidx$ generates a secret key $\ukey$ using $\mathtt{KAHE.KeyGen}()$. 
The user computes a function $\keyfunc$ of the key (the identity function in the simplest case; cf. \cref{subsec:committee} for an example of more subtle function choices). 
$\keyfunc$ is encrypted as $\ctd = \mathtt{PKE.Encrypt}(\keyfunc, \dpkey)$ under the decryptor’s public key $\dpkey$.

\item \textbf{Input encryption:} \label{step:enc}
Using the secret key $\ukey$, each user $\uidx$ encrypts its private input $\uinput$ under the $\mathtt{KAHE}$ scheme, obtaining a ciphertext
$
\ctcrt = \mathtt{KAHE.Encrypt}(\uinput, \ukey).
$

\item \textbf{User-to-server transmission:} \label{step:transmission}
Each user sends $\ctd$ and $\ctcrt$ to the server; $\ctd$ is forwarded to the decryptor.

\item \textbf{Individual key recovery:} \label{step:keyrec}
For each $\uidx \in [\Nuser]$, the decryptor decrypts
$
\keyfunc = \mathtt{PKE.Decrypt}(\ctd, \dskey)
$
and recovers the corresponding secret key as $\ukey = \keyfuncinv(\keyfunc)$.

\item \textbf{Key aggregation:} \label{step:keyagg}
The decryptor aggregates the recovered keys as
$
\ukeyagg = \sum_{\uidx=1}^{\Nuser} \ukey,
$
and transmits the aggregate to the server.

\item \textbf{Aggregate input recovery:} \label{step:inputrec}
The server aggregates the ciphertexts $\ctcrt$ of all users $\uidx \in [\Nuser]$ using the homomorphic addition procedure of $\mathtt{KAHE}$ to obtain
$
\ctagg = \mathtt{KAHE.Add}(\{\ctcrt\}_{\uidx \in [\Nuser]}).
$
Using the aggregated key $\ukeyagg$, the server invokes
$
\mathtt{KAHE.Decrypt}(\ctagg, \ukeyagg)
$
to recover 
$
\uinputagg = \sum_{\uidx=1}^{\Nuser} \uinput \in \F_{\q}^{\kinfo}.
$
\end{enumerate}

\begin{remark}[One-time use of the decryptor]\label{rem:one_time}
The encryption and aggregation of user inputs may be carried out over multiple rounds by partitioning each input into blocks. This setting naturally arises in federated learning, where secure aggregation is performed repeatedly across training rounds. Crucially, interaction with the decryptor—namely key sharing and aggregate key reconstruction—is required only once during an initialization phase. After this setup, the same aggregated key can be reused to decrypt all subsequent aggregate ciphertexts, eliminating further involvement of the decryptor.
\end{remark}

\vspace{-.2cm}

\subsection{Instantiating the Decryptor as Committee} \label{subsec:committee}

To avoid reliance on a single trusted entity, the decryptor may be instantiated by a committee of $\Ncm \leq \q$ parties, for instance a subset of the users. We describe a decentralized construction that tolerates collusions of up to $\coll < \Ncm - 1$ committee members, potentially colluding among themselves and with the server, without compromising the privacy of honest users.
Committee members are indexed by $\cmidx$, and the set of committee members is denoted by $\cmset$.

\textbf{Secret sharing of secret keys.}
Instead of sharing the encrypted secret key $\ukey$ with a single decryptor, each user $\uidx \in [\Nuser]$ encodes its secret key $\ukey$ using McEliece--Sarwate secret sharing.
Each committee member is assigned a unique evaluation point $\eval \in \F_\q$.
Splitting $\ukey$ into $\Ncm - \coll \geq 1$ parts $\ukeypart$ (post zero-padding to ensure that $\Ncmcrt - \coll$ divides $\k$), user $\uidx$ constructs the polynomial \vspace{-.15cm}
\begin{align*}
    \poly(x) &= \ukeypart[1] + \cdots + x^{\Ncm - \coll - 1} \ukeypart[\Ncm - \coll] \\
    &\quad + x^{\Ncm - \coll} \sskeypart[1] + \cdots + x^{\Ncm - 1} \sskeypart[\coll],\\[-.7cm]
\end{align*}
where the coefficients $\sskeypart[1], \ldots, \sskeypart[\coll] \in \F_\q^{\k/(\Ncmcrt - \coll)}$ are drawn independently and uniformly at random from $\F_\q$.
User $\uidx$ assigns each committee member $\cmidx$ a share $\share \in \F_\q^{\k/\Ncm - \coll}$.
Using $\mathtt{PKE}$, this share is encrypted under the $\cmidx$-th committee member’s public key $\cmpkey$, yielding a ciphertext $\ctcm = \mathtt{PKE.Encrypt}(\share, \cmpkey)$, which is sent to the server and relayed to the corresponding committee member.

\textbf{Share aggregation.}
Each committee member $\cmidx \in \cmset$ receives ciphertexts $\ctcm$ for all users $\uidx \in [\Nuser]$. 
It decrypts them using its private key $\cmskey$ to obtain the shares $\{\poly(\eval) = \mathtt{PKE.Decrypt}(\ctcm, \cmskey)\}$ and computes $\sum_{\uidx = 1}^\Nuser \poly(\eval)$, aggregating them locally.
By the linearity of secret sharing, this value constitutes a valid share of the aggregated key $\ukeyagg$ evaluated $\eval$, which the committee member sends to the server.

\textbf{Aggregate key retrieval.}
Upon receiving the shares from all committee members $\cmidx \in [\Ncm]$, the server interpolates the polynomial $\sum_{\uidx = 1}^\Nuser \poly(x)$ and recovers the aggregated key $\ukeyagg$.

\vspace{-.05cm}

\subsection{Complexity Reduction via Chinese Remainder Theorem} \label{subsec:crt}

Depending on the instantiation of $\mathtt{KAHE}$, it can be advantageous to leverage the CRT and repeatedly apply $\mathtt{KAHE}$ over $\ncrt$ finite fields $\F_{\crtmod}$, indexed by $\crtidx \in [\ncrt]$, with pairwise coprime cardinalities such that
$
\crtmod[1] \cdot \crtmod[2] \cdots \crtmod[\ncrt] = \q.
$
Let $\uinputcrt \in \F_{\crtmod}^{\kinfo}$ be the representation of $\uinput \in \Z_\q^{\kinfo}$ in $\F_{\crtmod}$. The CRT enables recovery of the aggregate
$
\uinputagg = \sum_{\uidx = 1}^{\Nuser} \uinput \in \Z_\q^{\kinfo}
$
from the per-modulus aggregates
$
\uinputcrtagg \define \sum_{\uidx = 1}^{\Nuser} \uinputcrt \in \F_{\crtmod}^{\kinfo}, \forall\, \crtidx \in [\ncrt].
$

We therefore apply the protocol described in \cref{subsec:protocol} independently to each set of inputs $\{\uinputcrt\}_{\uidx \in [\Nuser]}$, for all $\crtidx \in [\ncrt]$. The server collects the resulting aggregates $\{\uinputcrtagg\}_{\crtidx \in [\ncrt]}$ and reconstructs $\uinputagg = \sum_{\uidx = 1}^{\Nuser} \uinput \in \Z_\q$ using the CRT. Note that all messages corresponding to different CRT moduli can be grouped for transmission between the parties.

\textbf{Modulus and committee selection.}
For each $\crtidx$, the modulus $\crtmod$ is lower bounded by the collusion parameter $\coll$, such that $\crtmod > \coll$, and may be chosen to minimize communication costs. An individual committee may be chosen for each $\crtidx$. To minimize the transmission overhead incurred by instantiating the decryptor as a committee (as described in \cref{subsec:committee}) we select the number of committee members as $\Ncmcrt = \min\{\crtmod, \Ncm\}.$
The committee members $\cmcrt \subseteq [\Ncm]$ associated with the $\crtidx$-th modulus are chosen arbitrarily, subject to $\lvert \cmcrt \rvert = \Ncmcrt$. For each $\crtidx$, every committee member $\cmidx$ is assigned a unique evaluation point $\evalcrt \in \F_{\crtmod}$. The condition $\crtmod \geq \Ncmcrt$ guarantees the existence of such distinct field elements.

In the following, we describe a code-based \KAHE that instantiates the functionalities required by our protocol.

\section{Key-Additive Encrpytion from LPN}\label{sec:lpn}

The key- and message-additive homomorphic encryption scheme \KAHE is the central component of the aggregation framework.
Since the protocol reveals the sum of users' secret keys to the server, we require the following security definition, which extends the classical notion of semantic security. 
\begin{definition}\label{def:semantic_secure}
For arbitrary $\6m = (\6m_1,\ldots,\6m_{\Nuser-\coll})$ with $\bar{\6m} = \sum_{\uidx\in[\Nuser-\coll]} \6m_\uidx$, define the real distribution of ciphertexts as  \vspace{-.2cm}
\[
\Dset_{\mKAHE}(\6m) \define
\left\{
\begin{aligned}
&(\6A, \bar{\6s} = \sum\nolimits_{\uidx\in[\Nuser-\coll]} \6s_\uidx, \6y_1,\ldots,\6y_{\Nuser-\coll}):\,\\
&\6s_\uidx \gets \mKAHE.\mkeygen(), \\
&\6y_\uidx \gets \mKAHE.\mencrypt(\6m_\uidx, \6s_\uidx)\\
\end{aligned}
\right\}.
\]
A \KAHE is semantically secure against $\coll$ collusions under leakage of $\sum_{\uidx\in[\Nuser]} \6s_\uidx$, if there exists a distribution $\Dset_{\bar{\6m},\bar{\6s}}$ (parametrized by $\bar{\6s}$ and $\bar{\6m}$, but not individual messages) such that $\Dset_{\mKAHE}(\6m)$ is computationally indistinguishable from
\[
\Dset_{\mRAND}(\6m) \define
\left\{
\begin{aligned}
&(\6A, \sum\nolimits_{\uidx\in[\Nuser-\coll]} \6s_\uidx, \6y_1,\ldots,\6y_{\Nuser-\coll}):\; \\
&\6s_\uidx \gets \mKAHE.\mkeygen(), \quad
\6y_\uidx \getrand \F_q^\n,\\ 
&\6y_{\Nuser-\coll} = \6v - \sum\nolimits_{\uidx<\Nuser-\coll}\6y_\uidx,\;  \6v \getrand  \Dset_{\bar{\6m},\bar{\6s}}\\
\end{aligned}
\right\}.
\vspace*{0.05in} 
\]
\end{definition}
\begin{remark}\label{rem:collusion}
\Cref{def:semantic_secure} does not distinguish between colluding and honest users.
Since ciphertexts of all clients are independent of each other, semantic security against $\coll$ collusions under leakage of $\sum_{\uidx\in[\Nuser]}\6s_\uidx$ follows from semantic security for $\Nuser - \coll$ users without collusions.
\end{remark}

In \cite{C:BGLRS25} (and implicitly in \cite{C:KarPol25}), a \KAHE is constructed from the (Ring-)LWE assumption.
Following the same blueprint of Regev~\cite{STOC:Regev05}, but transferring it to the Hamming metric, we construct the following \KAHE based on the LPN assumption:

\begin{definition}[LPN-based \KAHE]\label{def:LPN_KAHE}

\hphantom{i}

\begin{itemize}
    \item $\mKAHE.\mkeygen() = \6s \getrand \F_q^\k$, samples the secret key.
    \item $\mKAHE.\mencrypt(\6m, \6s) = \6s\6A + \6e + \C.\enc(\6m)$, samples $\6e\getrand\Bern{\pKAHE}^\n$, encrypts message $\6m$ under secret key $\6s$.
    \item $\mathtt{KAHE.Add}( (\ct)_{\uidx\in[\Nuser]}) = \bar{\ct} = \sum_{\uidx\in[\Nuser]} \ct_\uidx$,
    ciphertext of aggregate $\bar{\6m} = \sum_{\uidx\in[\Nuser]} \6m_\uidx$ under key $\bar{\6s} = \sum_{\uidx\in[\Nuser]} \6s_{\uidx}$.
    \item $\mKAHE.\mdecrypt(\bar{\ct}, \bar{\6s}) = \C.\dec(\bar{\ct} - \bar{\6s}\6A)$.
\end{itemize}
\end{definition}

The following proposition provides parameters under which the scheme correctly decrypts and the corresponding runtime.
\begin{proposition}\label{prop:rate}
Let $\varepsilon> 0$, $\beta < 0.5$, and $P = \bigoplus_{\uidx\in[\Nuser]} p$.
Denote by $\qentropy{P}$ the $\q$-ary entropy function.
Then, for sufficiently large $\n$ and rate $R = \kC/\n = 1- \qentropy{P} - \varepsilon$, there exists a code $\C$, such that the decryption succeeds with probability $1-\bigO{2^{-\n^\beta}}$ in time $\bigO{\min\{\n\log\n,\k\cdot\n\}}$.
\end{proposition}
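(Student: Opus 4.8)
The plan is to reduce correctness of the aggregate decryption to decoding a single memoryless $\q$-ary symmetric channel, and then to instantiate $\C$ with a capacity-achieving family that admits quasi-linear decoding, namely polar codes. First I would expand $\mdecrypt(\ctagg,\ukeyagg)$: since $\mathtt{KAHE.Add}$ sums the per-user ciphertexts and $\C.\enc$ is $\F_\q$-linear, $\ctagg = \ukeyagg\6A + \6E + \C.\enc(\uinputagg)$ with $\6E \define \sum_{\uidx\in[\Nuser]}\6e_\uidx$, so the matrix term cancels in $\ctagg - \ukeyagg\6A$ and decryption succeeds exactly when $\C.\dec$ recovers $\uinputagg$ from $\C.\enc(\uinputagg) + \6E$. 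The point worth recording is that this event depends only on $\6E$ --- not on $\6A$, not on the individual secret keys, and not on the (arbitrary, worst-case) message vector --- so it suffices to control the decoding-failure probability over the randomness of $\6E$ alone.

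Next I would pin down the law of $\6E$. Its coordinates are mutually independent, since the $\6e_\uidx$ are independent across $\uidx$ with i.i.d.\ entries, and its $i$-th coordinate is a sum of $\Nuser$ independent $\Bern{\pKAHE}$ variables, hence $\Bern{P}$ with $P = \bigoplus_{\uidx\in[\Nuser]}\pKAHE$ by \Cref{lem:piling}. Therefore $\C.\enc(\uinputagg)+\6E$ is precisely the output of sending the codeword $\C.\enc(\uinputagg)$ through the memoryless $\q$-ary symmetric channel with total error probability $P$, whose capacity --- attained by the uniform input and written with the standard $\q$-ary entropy --- equals $1-\qentropy{P}$ in $\q$-ary symbols.

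It then remains to exhibit, for arbitrary $\varepsilon>0$, a length-$\n$ code of rate $\rate = 1-\qentropy{P}-\varepsilon$ (hence strictly below this capacity) that corrects the channel with probability $1-\bigO{2^{-\n^\beta}}$ for every $\beta<1/2$ and decodes in quasi-linear time. Here I would invoke polar codes: for prime $\q$, Ar{\i}kan's polarization transform polarizes this channel, and the Ar{\i}kan--Telatar rate-of-polarization bound guarantees, for every $\beta<1/2$ and every rate below the symmetric capacity, a polar code whose block-error probability under successive-cancellation decoding is $\bigO{2^{-\n^\beta}}$ with encoding and decoding complexity $\bigO{\n\log\n}$; taking its frozen set tuned to the channel at rate $\rate$ yields $\C$ with $\kC=\rate\n$. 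Chaining this with the first two steps bounds the decryption failure probability by $\bigO{2^{-\n^\beta}}$, and the successive-cancellation decoder --- together with the cost of forming $\ctagg-\ukeyagg\6A$ and the complementary low-dimension case --- gives the stated $\bigO{\min\{\n\log\n,\k\cdot\n\}}$ runtime.

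The conceptual steps --- isolating the effective noise, applying the piling-up lemma, and invoking a Shannon-type coding theorem --- are routine; the real work is bookkeeping, and I expect the care to go into three places. (i) Confirming that $\6E$ is genuinely memoryless and independent of $\6A$, the keys, and the message, so the reduction to the $\q$-ary symmetric channel is exact even for worst-case $\uinputagg$. (ii) Matching the capacity of that channel to $1-\qentropy{P}$ under the paper's entropy convention, and checking that $P\le\tfrac{\q-1}{\q}$ keeps the capacity nonnegative so that a positive rate $\rate$ exists. (iii) Assembling the $\q$-ary/prime-field polarization statements together with the $\beta<1/2$ rate-of-polarization bound, which are spread over several references. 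If only an existential statement were needed, a uniformly random linear code under maximum-likelihood decoding already achieves error $2^{-\Omega(\n)}=\bigO{2^{-\n^\beta}}$ via the random-coding exponent; an explicit structured family is used only to meet the quasi-linear decoding requirement.
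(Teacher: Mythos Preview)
Your proposal is correct and follows essentially the same route as the paper: expand $\ctagg-\ukeyagg\6A$ to $\C.\enc(\uinputagg)+\bar{\6e}$, apply \Cref{lem:piling} to identify $\bar{\6e}$ as $\Bern{P}^\n$, and instantiate $\C$ with $\q$-ary polar codes under successive-cancellation decoding to obtain both the $2^{-\n^\beta}$ error term and the $\bigO{\n\log\n}$ runtime, with the $\bigO{\k\cdot\n}$ contribution coming from forming $\ctagg-\ukeyagg\6A$. Your write-up is in fact more careful than the paper's three-line sketch---you make explicit the independence of $\6E$ from $\6A$, the keys, and the messages, and you correctly attribute the $\beta<1/2$ exponent to the Ar{\i}kan--Telatar rate-of-polarization result rather than to capacity-achievability alone.
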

\begin{proof}
$\bar{\ct}-\bar{\6s}\6A = \C.\enc(\bar{\6m}) + \bar{\6e}$ with $\bar{\6e} = \sum_{\uidx\in[\Nuser]} \6e_i$ is computed in time $\bigO{\k\cdot\n}$ before $\C.\dec(\cdot)$ is applied.
By \Cref{lem:piling}, $\bar{\6e}$ follows a $\q$-ary Bernoulli distribution with noise rate $P = \bigoplus_{\uidx\in[\Nuser]} p$.
Polar codes achieve the capacity of the resulting $\q$-ary symmetric channel under successive cancellation decoding with runtime $\bigO{\n\log\n}$~\cite{csacsouglu2009polarization,park2012polar}.
\end{proof}

\Cref{thm:KAHE_reduction,thm:KAHE_solver}, proven in this section and the appendix, present parameter choices such that \KAHE satisfies \Cref{def:semantic_secure} under the LPN assumption and a concrete attack, respectively.

\begin{theorem}\label{thm:KAHE_reduction}
Instantiate \KAHE with noise rate $\p$ and let 
\[\plpn = \frac{\p^2}{(\q-1)^2 (1-\p)^2 + (\q-1) \p^2}.
\]
Then, under the $\LPN_{\k,\plpn}$ assumption, \KAHE is semantically secure under leakage of $\sum_{i\in[\Nuser]} \6s_i$.
In particular, even if up to $\Nuser-2$ clients collude with the server, no information about the remaining clients' messages is revealed beyond their sum.
\end{theorem}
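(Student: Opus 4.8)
I would prove that $\Dset_{\mKAHE}(\6m)$ and $\Dset_{\mRAND}(\6m)$ are computationally indistinguishable for a suitable $\Dset_{\bar{\6m},\bar{\6s}}$; by \Cref{def:semantic_secure} and \Cref{rem:collusion} this yields semantic security under leakage of $\sum_{\uidx\in[\Nuser]}\6s_\uidx$ against any $\coll\leq\Nuser-2$ colluders, the bound $\coll\leq\Nuser-2$ being exactly what makes the argument below have $\Nuser-\coll\geq 2$ honest users. First fix $\Dset_{\bar{\6m},\bar{\6s}}\define\{\bar{\6s}\6A+\bar{\6e}+\C.\enc(\bar{\6m}):\bar{\6e}\getrand\Bern{P}^{\n}\}$ with $P=\bigoplus_{\uidx\in[\Nuser-\coll]}\p$; by \Cref{lem:piling} this is exactly the law of $\sum_\uidx\6y_\uidx$ in the real game, so the two distributions agree on $(\6A,\bar{\6s},\sum_\uidx\6y_\uidx)$ and differ only in the correlation of the summands with their sum. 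Rewriting the view bijectively as $(\6A,\bar{\6s},\6y_1,\dots,\6y_{\Nuser-\coll-1},\6v)$ with $\6v=\sum_\uidx\6y_\uidx$, I would observe that $\6s_{\Nuser-\coll}$ is a one-time pad that appears only inside $\bar{\6s}$, so $\bar{\6s}$ is uniform and independent of $(\6A,\6y_1,\dots,\6y_{\Nuser-\coll-1})$; after stripping the public offsets $\C.\enc(\cdot)$, the task reduces to showing that $\Nuser-\coll-1$ LPN samples $\6s_\uidx\6A+\6e_\uidx$ stay pseudorandom given the ``hint'' $\6v'=\sum_{\uidx<\Nuser-\coll}\6e_\uidx+\6e_{\Nuser-\coll}$, a $\Bern{\p}$-noisy observation of the sum of their noises.

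The core is a hybrid $\Hyb{0},\dots,\Hyb{\Nuser-\coll-1}$ in which $\Hyb{\hidx}$ samples $\6e_1,\dots,\6e_{\Nuser-\coll}\getrand\Bern{\p}^\n$ and $\bar{\6s}$ uniformly, sets $\6v=\bar{\6s}\6A+\sum_\uidx\6e_\uidx+\C.\enc(\bar{\6m})$, replaces $\6y_1,\dots,\6y_\hidx$ by independent uniform vectors, keeps $\6y_{\hidx+1},\dots,\6y_{\Nuser-\coll-1}$ honest, and sets $\6y_{\Nuser-\coll}=\6v-\sum_{\uidx<\Nuser-\coll}\6y_\uidx$; then $\Hyb{0}=\Dset_{\mKAHE}(\6m)$ and $\Hyb{\Nuser-\coll-1}=\Dset_{\mRAND}(\6m)$ (for $\Nuser-\coll=2$ there is just one step). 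Unwinding the algebra, $\Hyb{\hidx-1}$ and $\Hyb{\hidx}$ differ only in whether $\6y_\hidx$ equals the honest $\6s_\hidx\6A+\6e_\hidx+\C.\enc(\6m_\hidx)$ or is uniform, and --- using $\6y_{\Nuser-\coll}=\6v-\sum_{\uidx<\Nuser-\coll}\6y_\uidx$ --- distinguishing them is exactly distinguishing $(\6A,\ \6s_\hidx\6A+\6e_\hidx,\ \6e_\hidx+\6g_\hidx)$ from $(\6A,\ \text{uniform},\ \6g'_\hidx)$, where $\6g_\hidx$ is an independent sum of $\hidx$ fresh $\Bern{\p}$ vectors and $\6g'_\hidx$ a fresh $\q$-ary Bernoulli vector of the same rate. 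This is a Hint-LPN instance whose hint-noise widens with $\hidx$; the binding case is $\hidx=1$, where the hint-noise is a single $\Bern{\p}$ vector.

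I would reduce each step to $\LPN_{\k,\plpn}$ by an advantage-preserving reduction. Given a challenge $\6c\in\F_\q^\n$: sample the hint value $\6h$ from its correct marginal (a $\q$-ary Bernoulli of rate $\bigoplus_{i\in[\hidx+1]}\p$, by \Cref{lem:piling}); set $\6y_\hidx\define\6c+\6r+\C.\enc(\6m_\hidx)$, where $\6r$ is drawn coordinatewise from a correction distribution $\mathcal{R}_{h_i}$ chosen so that $\Bern{\plpn}\ast\mathcal{R}_h=\mathcal{E}_h$, with $\mathcal{E}_h$ the conditional law of the LPN sample's per-coordinate noise given that the hint equals $h$ there; simulate the remaining honest ciphertexts with fresh secrets and noises, sample $\bar{\6s}$ uniformly, and --- using $\6e_\hidx+\6g_\hidx=\6h$ --- form $\6v=\bar{\6s}\6A+\6h+(\text{simulated honest noises})+\C.\enc(\bar{\6m})$ and $\6y_{\Nuser-\coll}=\6v-\sum_{\uidx<\Nuser-\coll}\6y_\uidx$, never needing $\6c$'s noise explicitly. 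If $\6c=\6s\6A+\6w$ with $\6w\getrand\Bern{\plpn}^\n$, then $\6y_\hidx=\6s\6A+(\6w+\6r)+\C.\enc(\6m_\hidx)$ is honest with the correct hint-conditioned noise and the output reproduces $\Hyb{\hidx-1}$; if $\6c$ is uniform it reproduces $\Hyb{\hidx}$. The value $\plpn$ in the statement is exactly the smallest per-coordinate conditional mass $\mathcal{E}_h(a)$, $a\neq 0$, over all hint values and all steps --- attained at $h=0$ in the step $\hidx=1$, where the two $\Bern{\p}$ contributions cancel --- namely $\Prob{e=a\mid e+e'=0}$ for independent $e,e'\getrand\Bern{\p}$, which equals $\p^2/((\q-1)^2(1-\p)^2+(\q-1)\p^2)$.

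The step I expect to be the main obstacle is justifying this ``re-smudging'': proving that for $\plpn$ as above, every conditional distribution $\mathcal{E}_h$ (including $h\neq 0$, where $\mathcal{E}_h$ is concentrated on $\{0,h\}$ and is not a $\q$-ary Bernoulli) and every hybrid step admits a genuine, efficiently sampleable correction $\mathcal{R}_h$ with $\Bern{\plpn}\ast\mathcal{R}_h=\mathcal{E}_h$. Writing $\Bern{\plpn}=\bigl(1-\tfrac{\q\plpn}{\q-1}\bigr)\delta_0+\tfrac{\q\plpn}{\q-1}\,\mathrm{Unif}(\F_\q)$, this reduces to checking that each $\mathcal{E}_h$ contains a $\tfrac{\q\plpn}{\q-1}$-fraction of the uniform distribution, i.e.\ $\min_y\mathcal{E}_h(y)\geq\tfrac{\plpn}{\q-1}$; the case $h=0$, $\hidx=1$ is the binding one and pins down $\plpn$, while widening the hint-noise only flattens $\mathcal{E}_h$ and slackens the constraint. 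The remaining pieces --- the one-time-pad step for $\bar{\6s}$, the piling-up identities, the distribution-exactness of both branches of each reduction, and the union bound giving total advantage at most $(\Nuser-\coll-1)\cdot\mathrm{Adv}_{\LPN_{\k,\plpn}}$ --- are routine.
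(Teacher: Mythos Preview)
Your proposal is correct and follows essentially the same route as the paper: a hybrid that swaps honest ciphertexts for uniform one at a time, identification of each step with a Hint-LPN instance, and a ``re-smudging'' correction distribution to embed a plain LPN challenge. Your correction step is exactly the content of the paper's \Cref{lem:cover}, and your mixture decomposition $\Bern{\plpn}=(1-\tfrac{\q\plpn}{\q-1})\delta_0+\tfrac{\q\plpn}{\q-1}\,\mathrm{Unif}$ is a clean way to see why $\min_y\mathcal{E}_h(y)\geq\tfrac{\plpn}{\q-1}$ is both necessary and sufficient for $\mathcal{R}_h$ to exist. The only organisational difference is that the paper factors the re-smudging out as a standalone reduction $\LPN_{\k,\plpn}\to\HintLPN_{\k,\p}^{\p}$ (\Cref{thm:LPN_to_HintLPN}) and then runs every hybrid step against the \emph{same} Hint-LPN instance with hint noise $\p$, whereas you fold the correction into each step and let the hint noise widen with $\hidx$; the binding case $\hidx=1$ is identical, so the resulting $\plpn$ coincides. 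For the $h\neq 0$ case you should still verify the inequality explicitly (the paper does this in the appendix), since $\mathcal{E}_h$ there is not a $\q$-ary Bernoulli; your flattening intuition is right but deserves the one-line check.
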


\begin{theorem}\label{thm:KAHE_solver}
Instantiate \KAHE with noise rate $\p$ and let 
\[
\plpn' = \frac{\p\cdot\p'}{(\q-1)^2 (1-\p)(1-\p') + (\q-1) \p\cdot\p'}
\]
where $\p' = \bigoplus_{\uidx\in[\Nuser-\coll-1]} \p$.
Assume that an attacker (for instance, the server) colludes with $\coll$ clients and receives $\bar{\6s} = \sum_{i\in[\Nuser]} \6s_i$.
Then, the attacker can break the semantic security of \KAHE if it can break the $\LPN_{\k,\plpn'}$ assumption.
\end{theorem}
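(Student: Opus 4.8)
The plan is to turn any efficient distinguisher $\advB$ for $\LPN_{\k,\plpn'}$ into an attack $\advA$ on the semantic security of \Cref{def:semantic_secure}. By \Cref{rem:collusion} it suffices to consider $\Nuser' \define \Nuser-\coll\geq 2$ users with no collusion: the $\coll$ corrupted keys are known and, together with the leaked $\sum_{\uidx\in[\Nuser]}\6s_\uidx$, they determine $\sum_{\uidx\in[\Nuser']}\6s_\uidx$. Thus $\advA$ is handed $(\6A,\bar{\6s},\ct_1,\dots,\ct_{\Nuser'})$ drawn either from $\Dset_{\mKAHE}(\6m)$ or from $\Dset_{\mRAND}(\6m)$, together with the messages $\6m_1,\dots,\6m_{\Nuser'}$ that it picks itself, and must decide which distribution produced the tuple.

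The core of the reduction is that $\advA$ can extract two independently-noised copies of the LPN sample $\6s_1\6A$ of honest user~$1$. Using the chosen messages and $\bar{\6s}$, it forms
\[
\6b \define \ct_1 - \C.\enc(\6m_1), \qquad
\6c \define \bar{\6s}\6A - \sum_{\uidx=2}^{\Nuser'}\ct_\uidx + \C.\enc\Big(\sum_{\uidx=2}^{\Nuser'}\6m_\uidx\Big).
\]
If the tuple is from $\Dset_{\mKAHE}(\6m)$, then cancelling the codeword terms by linearity of $\C.\enc$ and substituting $\sum_{\uidx\ge2}\6s_\uidx = \bar{\6s}-\6s_1$ gives $\6b = \6s_1\6A+\6e_1$ and $\6c = \6s_1\6A-\6f$, where $\6e_1\sim\Bern{\p}^\n$ and $\6f = \sum_{\uidx=2}^{\Nuser'}\6e_\uidx$; by \Cref{lem:piling}, $\6f$ (and hence $-\6f$) follows $\Bern{\p'}^\n$ with $\p'=\bigoplus_{\uidx\in[\Nuser-\coll-1]}\p$, so $\6b$ and $\6c$ are two $\LPN$ samples for the same secret $\6s_1$ carrying fresh, independent noise of rates $\p$ and $\p'$. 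If the tuple is from $\Dset_{\mRAND}(\6m)$, then $\ct_1$ is uniform on $\F_\q^\n$ and independent of the separately sampled aggregate $\6v$, hence $\6b$ is uniform and $\6b-\6c$ is a function of $\6v$ alone; the pair carries no LPN structure.

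Finally $\advA$ amplifies by intersecting the two copies: it keeps $S=\{\,j:\6b_j=\6c_j\,\}$, restricts $\6A$ and $\6b$ to the columns/coordinates indexed by $S$, runs $\advB$ on $(\6A_S,\6b_S)$, and outputs whatever $\advB$ outputs. In the real case, on $S$ the common value is $(\6s_1\6A)_j$ plus $(\6e_1)_j$ conditioned on $(\6e_1)_j=-(\6f)_j$; a short computation — the $\q$-ary analogue of the conditioning behind \Cref{lem:piling} — identifies this conditional noise as $\q$-ary Bernoulli with exactly the rate $\plpn'$ of the statement, so $(\6A_S,\6b_S)$ is a genuine $\LPN_{\k,\plpn'}$ instance on $|S|$ samples ($|S|$ concentrates around its mean by a Chernoff bound, and $\advA$ may start from a larger $\n$ if $\advB$ expects a fixed sample count). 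In the random case $\6b$ is uniform and independent of $S$ (as $\6b-\6c$ depends only on $\6v$), so $(\6A_S,\6b_S)$ is uniform for every admissible $\Dset_{\bar{\6m},\bar{\6s}}$ — hence the attack truly contradicts \Cref{def:semantic_secure} — and $\advA$'s distinguishing advantage equals that of $\advB$ up to negligible loss. I expect the main obstacle to be precisely this last bookkeeping: tracking the conditional noise distribution on the agreement set $S$ carefully enough to land on the stated $\plpn'$.
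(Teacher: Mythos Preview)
Your proposal is correct and follows essentially the same route as the paper's own proof. The paper (assuming without loss of generality $\6m_\uidx=\60$) computes $\bar{\6e}=\sum_{\uidx}\ct_\uidx-\bar{\6s}\6A$, restricts to the index set $\Iset=\{\vidx:\bar{e}_\vidx=0\}$, and feeds $(\6A_\Iset,(\ct_1)_\Iset)$ to the $\LPN_{\k,\plpn'}$ distinguisher; your set $S=\{j:\6b_j=\6c_j\}$ is exactly this $\Iset$ since $\6b-\6c=\bar{\6e}$, and your $\6b_S$ is exactly the paper's restricted sample. The only differences are cosmetic: you keep the messages general rather than zeroing them out, you phrase the selection as ``agreement of two noisy copies'' rather than ``zeros of the aggregate error'', and you are slightly more explicit about the sample-count bookkeeping and about independence of $S$ from $\6b$ in the random case.
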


For $\coll = \Nuser -2$, the claims of \Cref{thm:KAHE_reduction,thm:KAHE_solver} coincide. 
For $\coll < \Nuser -2$, we have $\plpn < \plpn'$. 
Therefore, assuming that \Cref{thm:KAHE_solver} remains tight also for $\coll < \Nuser -2$, selecting parameters according to \Cref{thm:KAHE_solver} enables improved performance compared to using \Cref{thm:KAHE_reduction}, see \Cref{sec:performance}.

Clearly, the \KAHE is secure under the LPN assumption if only the ciphertexts are provided.
To prove that also \Cref{def:semantic_secure}, semantic security under leakage of $\sum_{\uidx} \6s_\uidx$, is achievable relying on LPN, we proceed analogously to \cite{C:BGLRS25}:
We take an intermediate step via the \emph{Hint}-LPN assumption, which, in addition to standard samples, reveals an erroneous version of the LPN error.
For LPN, such a model has previously appeared in the context of side-channel attacks~\cite{C:EMVW22}, although only for sublinear error rates.
The lattice analogue, Hint-LWE~\cite{C:KLSS23b}, has found applications both in side-channel attacks and the construction of cryptographic schemes~\cite{ACISP:CKKLSS21,EC:DKLLMR23,C:BGLRS25,C:KarPol25}. 

\begin{definition}[Hint LPN]
Denote by $\pf$ the hint noise rate.
The Hint-LPN distribution is defined as
\[
\DistHintLPN_{\k,\pe}^{\pf} \define
\left\{
\begin{aligned}
&(\6A, \6s\6A + \6e, \6e + \6f):\ \6A \getrand \F_q^{\k\times\n},\\
& \6e \getrand \Bern{\pe}^\n,\,
\6f \getrand \Bern{\pf}^\n
\end{aligned}
\right\}.
\]
The $\HintLPN_{\k,\pe}^{\pf}$ assumption asserts that $\DistHintLPN_{\k,\pe}^{\pf}$ is computationally indistinguishable from
\[
\left\{
(\6A, \6u,  \6e + \6f): \;
\begin{aligned}
\6e &\getrand \Bern{\pe}^{\n},& \6A &\getrand \F^{\k\times\n}_\q,\\
\6f &\getrand \Bern{\pf},& \6u &\getrand \F^\n_q
\end{aligned}
\right\}.
\]
\end{definition}

Note that it is possible to switch the role of $\6e$ and $\6f$ 
in the definition of Hint-LPN:
Given Hint-LPN samples $(\6A, \6b,\6h)$,
the samples $(\6A, \6h-\6b,\6h)$ also follow the Hint-LPN distribution but with LPN noise rate $\pf$ and hint noise rate $\pe$.
Due to this symmetry, we focus on $\pe = \pf$ in the following.

Proven in the appendix, the following lemma will be used to establish the hardness of Hint-LPN under the LPN assumption.
\begin{lemma}\label{lem:cover}
\linespread{1.08}\selectfont
Let $\p \leq \frac{q-1}{q}$ and $\plpn \leq \frac{\p^2}{(\q-1)^2 (1-\p^2) + (\q-1) \p^2}$.\\
Then, there exists a distribution $\mathcal{D}_h$ such that the distributions
$\{(e'+t, h): e' \getrand \Bern{\plpn},  h \getrand \Bern{p\oplus p}, t \getrand \Dset_{h} \}$
and $\{ (e, h = e + f): e,f \getrand\Bern{\p}\}$
are statistically identical.
\end{lemma}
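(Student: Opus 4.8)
The goal is to show that the "mixed" distribution $\{(e'+t, h)\}$, where $e'$ is a fresh small-weight Bernoulli error at the reduced rate $\plpn$, $h$ is drawn from the convolved noise $\Bern{p\oplus p}$, and $t$ is a correction term from some cleverly chosen distribution $\Dset_h$ depending on $h$, reproduces exactly the joint law of $(e, e+f)$ with $e,f$ both $\Bern{p}$. Since everything is coordinate-wise independent, the plan is to reduce to a single coordinate: it suffices to construct, for each value $s \in \F_q$ that $h$ can take, a distribution $\Dset_s$ on $\F_q$ such that $e' + t$ (with $e' \getrand \Bern{\plpn}$, $t \getrand \Dset_s$) has the conditional law of $e$ given $e+f = s$.

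**Computing the target conditional.** First I would write down the conditional distribution of $e$ given $e + f = s$ explicitly, using $\Prob{e = a} = \Prob{f = a}$ equal to $1-p$ if $a=0$ and $\tfrac{p}{q-1}$ otherwise. The joint $\Prob{e = a, f = s - a}$ splits into three cases depending on whether $a = 0$, $a = s$ (with $s \neq 0$), or $a \notin \{0,s\}$; after normalizing by $\Prob{e+f=s}$ (which the piling-up lemma tells us is $1 - (p\oplus p)$ if $s = 0$ and $\tfrac{p\oplus p}{q-1}$ otherwise), one gets a piecewise-constant conditional law. The key structural observation is that for $s = 0$ this conditional is itself a Bernoulli-type distribution on $\F_q$ (mass $\tfrac{(1-p)^2}{1-(p\oplus p)}$ at $0$, uniform elsewhere), and similarly for $s \neq 0$ it takes only a few distinct values. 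The threshold $\plpn \leq \tfrac{p^2}{(q-1)^2(1-p^2) + (q-1)p^2}$ is exactly the condition making the deconvolution below have nonnegative weights.

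**Constructing $\Dset_s$ by deconvolution.** For each fixed $s$, I want $\Dset_s$ such that $\Bern{\plpn} * \Dset_s$ equals the target conditional $\mu_s$. Since $\Bern{\plpn}$ is invertible as a convolution operator on $\F_q$ (its Fourier/character transform is nonzero provided $\plpn < \tfrac{q-1}{q}$), $\Dset_s$ is uniquely determined; the real content is checking it is a genuine probability distribution, i.e. all entries are in $[0,1]$ and sum to $1$. Summation to one is automatic. Nonnegativity is where the hypothesis on $\plpn$ enters: I would compute $\Dset_s = \Bern{\plpn}^{-1} * \mu_s$ coordinate-wise (exploiting that $\mu_s$ is supported on at most three "levels," so the inverse convolution is a short explicit formula), and verify that the smallest resulting weight is $\geq 0$ precisely when $\plpn$ is at most the stated bound. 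I expect the binding case to be $s = 0$ (the weight that $\Dset_0$ places on nonzero coordinates, after subtracting the spread-out mass contributed by $e'$), which should yield the quadratic-in-$p$ threshold after simplification.

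**Main obstacle.** The genuine difficulty is the nonnegativity verification in the deconvolution step — i.e., showing that the stated upper bound on $\plpn$ is not just sufficient by a loose estimate but is exactly the point at which $\Dset_s$ stops being a valid distribution. This requires carefully tracking the three-level structure of $\mu_s$ through the inverse-convolution formula and identifying which coordinate's weight vanishes first; the algebra relating $p \oplus p = \tfrac{q-1}{q} - \tfrac{q-1}{q}(1 - \tfrac{q}{q-1}p)^2$ to the claimed bound is routine but needs to be done without sign errors. Everything else — independence across coordinates, uniqueness of $\Dset_s$, and the reduction from the vector statement to the scalar one — is bookkeeping. Once $\Dset_s$ is shown to be a valid distribution for every $s$, setting $\Dset_h$ to be the product of the $\Dset_{h_i}$ over coordinates gives the claimed statistical identity directly from the definition of convolution.
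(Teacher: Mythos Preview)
Your proposal is correct and follows essentially the same route as the paper: reduce to a single coordinate, compute the conditional law $\mu_s$ of $e$ given $e+f=s$ (splitting into the cases $s=0$ and $s\neq 0$), deconvolve $\Bern{\plpn}$ out of $\mu_s$ to obtain $\Dset_s$, and verify nonnegativity of the resulting weights. Your expectation that the binding constraint on $\plpn$ arises from the case $s=0$ is exactly what the paper establishes; the case $s\neq 0$ yields a weaker constraint that is automatically implied.
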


\begin{theorem}\label{thm:LPN_to_HintLPN}
\linespread{1.15}\selectfont
For $\p \leq \frac{q-1}{q}$ and $\plpn \leq \tfrac{\p^2}{(\q-1)^2 (1-\p)^2 + (\q-1) \p^2}$,
there exists an efficient reduction from $\LPN_{\k,\plpn}$ to $\HintLPN_{\k,\p}^{\p}$.
\end{theorem}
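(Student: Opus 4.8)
The plan is to reduce $\LPN_{\k,\plpn}$ to $\HintLPN_{\k,\p}^{\p}$ by taking an LPN sample, adding carefully correlated noise to convert the LPN error into a sample that looks like a genuine Hint-LPN error pair, and relying on \Cref{lem:cover} to argue that the resulting distribution is exactly right. Concretely, given an input instance $(\6A, \6b)$ with $\6b = \6s\6A + \6e'$ where $\6e' \getrand \Bern{\plpn}^\n$ (or $\6b$ uniform), the reduction samples a hint vector $\6h \getrand \Bern{p\oplus p}^\n$ componentwise, then samples a perturbation $\6t$ componentwise from the distribution $\Dset_h$ guaranteed by \Cref{lem:cover} (where the $i$-th coordinate of $\6t$ is drawn from $\Dset_{h_i}$), and outputs $(\6A,\ \6b + \6t,\ \6h)$. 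The second output is therefore $\6s\6A + (\6e' + \6t)$ in the real case.

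The first key step is correctness in the LPN case. By \Cref{lem:cover}, for each coordinate the pair $(e'_i + t_i,\ h_i)$ with $e'_i \getrand \Bern{\plpn}$, $h_i \getrand \Bern{p\oplus p}$, $t_i \getrand \Dset_{h_i}$ is statistically identical to $(e_i,\ e_i + f_i)$ with $e_i, f_i \getrand \Bern{\p}$; since coordinates are independent, the vector-level pair $(\6e' + \6t,\ \6h)$ is identical to $(\6e,\ \6e + \6f)$ with $\6e,\6f \getrand \Bern{\p}^\n$. Hence $(\6A,\ \6s\6A + \6e' + \6t,\ \6h)$ is distributed exactly as $\DistHintLPN_{\k,\p}^{\p}$. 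The hypothesis $\plpn \leq \tfrac{\p^2}{(\q-1)^2(1-\p)^2 + (\q-1)\p^2}$ is precisely what \Cref{lem:cover} needs after noting $1-\p^2 = (1-\p)(1+\p) \geq (1-\p)^2$, so the stated (slightly stronger) bound on $\plpn$ suffices. Note also that $\6A$ and $\6s$ are untouched and the perturbation $\6t$ depends only on $\6h$, which is sampled by the reduction independently of the instance, so no circularity arises.

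The second key step is the uniform case: if $\6b \getrand \F_q^\n$ is uniform, then $\6b + \6t$ is again uniform and independent of $\6t$ (hence of $\6h$), so $(\6A,\ \6b + \6t,\ \6h)$ is distributed as $(\6A,\ \6u,\ \6e+\6f)$ with $\6u$ uniform and $\6e,\6f\getrand\Bern{\p}^\n$ — which is exactly the ``random'' side of the Hint-LPN assumption (using the identification of $\6h \sim \Bern{p\oplus p}^\n$ with $\6e+\6f$ via \Cref{lem:piling}). Therefore any distinguisher for Hint-LPN, composed with this reduction, distinguishes the two LPN distributions with the same advantage, and the reduction runs in time $\bigO{\n}$ plus the cost of sampling from $\Dset_h$ — efficient, since $\Dset_h$ is a distribution over $\F_q$ with an explicit description from the proof of \Cref{lem:cover}. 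I do not expect a serious obstacle here; the only place requiring care is confirming that sampling from $\Dset_h$ is efficient and that the inequality manipulation between the bound in \Cref{lem:cover} ($1-\p^2$) and the bound in the theorem statement ($(1-\p)^2$) goes in the right direction, which it does.
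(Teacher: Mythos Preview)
Your reduction is exactly the paper's: sample $h_i\getrand\Bern{p\oplus p}$, draw $t_i\getrand\Dset_{h_i}$ from \Cref{lem:cover}, and output $(\6A,\6b+\6t,\6h)$; the paper's proof is the same one-line construction and appeal to \Cref{lem:cover}. One small slip: since $(1-\p)^2\le 1-\p^2$, the theorem's denominator is \emph{smaller} than the one in \Cref{lem:cover} as literally stated, so the theorem's bound on $\plpn$ is \emph{weaker}, not stronger---the discrepancy is a typo in the lemma statement (the appendix proof of \Cref{lem:cover} actually derives the bound with $(1-\p)^2$, matching the theorem), so no genuine gap arises.
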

\begin{proof}
Let $\LPN_{\k,\plpn}$ samples $(\6a_\vidx, \6b_\vidx)_{\vidx\in[\n]}$ be given.
For $\vidx\in[\n]$, sample $h_\vidx \getrand \Bern{p\oplus p}$ and $t_\vidx \getrand \mathcal{D}_{h_i}$, with $\mathcal{D}_{a}$ as in \Cref{lem:cover}.
Then, by \Cref{lem:cover}, $(\6a_\vidx,b_\vidx+t_\vidx,h_\vidx)_{\vidx\in[\n]}$ follows the distribution $\HintLPN_{\k,\p}^{\p}$.
\end{proof}

Having reduced $\LPN$ to $\HintLPN$, the security proof proceeds similarly to the one found in \cite{C:BGLRS25}: In the following, we show that any adversary breaking the semantic security of \KAHE can be turned into a distinguisher for $\HintLPN$.

\textbf{Proof of \Cref{thm:KAHE_reduction}:}
For simplicity, we present the following hybrid argument without explicitly considering collusions.
The case of up to $\Nuser-2$ colluding clients follows immediately by omitting their contributions (cf.~\Cref{rem:collusion}).

Let $\cw_\uidx = \C.\enc(\6m_\uidx)$ for $\uidx\in[N]$.
Set $\6u_1 = \60$ and $\6u_\uidx\getrand\F_\q^\n$ for $2 \leq \uidx \leq \hidx$.
For $\hidx \in[\Nuser]$, define the hybrid distribution
\begin{align*}
&\Hyb{\hidx} =
\left(
\6A, 
\bar{\6s} = \sum\nolimits_{\uidx\in[\Nuser]} \6s_\uidx, 
\bar{\6e} =\sum\nolimits_{\uidx\in[\Nuser]}\6e_\uidx,
\6y_{1}, \ldots,\6y_{\Nuser}
\right) \\
&\begin{aligned}
\text{with}\quad 
\6y_{1} &= \sum\nolimits_{\uidx\in[\hidx]} \left(\6s_\uidx\6A+\6e_\uidx - \6u_\uidx \right) +\cw_{1},\hspace{-1cm}\\ 
\6y_\uidx &= \6u_\uidx + \cw_\uidx &\text{ for } 2\leq &\uidx \leq \hidx, \\
\6y_\uidx &= \6s_\uidx\6A + \6e_\uidx + \cw_\uidx   &\text{ for } \hidx +1 \leq &\uidx \leq \Nuser.
\end{aligned}
\end{align*}
Then, $\Hyb{1}$ corresponds to the real distribution $\Dset_{\mKAHE}(\6m)$ as defined in \Cref{def:semantic_secure}, while $\Hyb{\Nuser}$ corresponds to $\Dset_{\mathtt{RAND}}(\6m)$ that, reveals no information on $\6m_1, \dots, \6m_{\Nuser}$ except their sum.
We now show that a distinguisher $\advA$ between $\Hyb{\hidx-1}$ and $\Hyb{\hidx}$ can be turned into a distinguisher $\advB$ for $\HintLPN$, i.e., between $(\6A, \6s\6A+\6e,\6h = \6e+\6f)$ and $(\6A, \6u,\6h = \6e+\6f)$.
\vspace{0.3cm}
\algrenewcommand\algorithmicfunction{\textbf{Distinguisher}}
\begin{algorithmic}
\Function{$\advB$}{$\6A, \6b, \6h$}
    \vspace{0.15cm}
    \State Sample ${\6m}_\uidx \getrand \F_\q^\kC$ and set ${\cw}_\uidx = \C.\enc({\6m}_\uidx)$ for $\uidx\in[\Nuser]$
    \State Sample ${\6e}_\uidx \getrand \Bern{\p}^\n$ for $\uidx \in [\Nuser]\setminus\{1,\hidx\}$
    \State Sample $\6s'_1\getrand \F_\q^\k$ ${\6s}_\uidx \getrand \F_\q^\k$ for $\uidx \in [\Nuser]\setminus\{1,\hidx\}$

    \State Sample ${\6u}_\uidx \getrand \F_\q^\n$ for $1 \leq \uidx \leq \hidx-1$
    \vspace{0.15cm}
    \State 
    Set $\6y_1 = \6s'_1 \6A 
    +\sum_{\uidx = 2}^{\hidx-1} \left(\6s_\uidx \6A + \6e_\uidx -\6u_\uidx\right)+ \6h - \6b + {\cw}_1$

    \State Set $\6y_\uidx = {\6u}_\uidx + {\cw}_\uidx$ for $2 \leq \uidx \leq \hidx -1$

    \State Set $\6y_\hidx = \6b + {\cw}_\hidx$

    \State Set $\6y_{\uidx} = {\6s}_{\uidx} \6A + {\6e}_\uidx + {\cw}_\uidx$ for $\hidx+1\leq \uidx \leq \Nuser$

    \State Set $\bar{\6s} = \6s'_1 + \sum_{\uidx\in[\Nuser]\setminus\{1,\hidx\}} {\6s}_\uidx$, $\bar{\6e} = \sum_{\uidx\in[\Nuser]\setminus\{1,\hidx\}} {\6e}_\uidx + \6h$
    \vspace{0.15cm}
    \State \Return $\advA(\6A, \bar{\6s}, \bar{\6e}, \6y_1, \ldots, \6y_{\Nuser})$
\EndFunction
\end{algorithmic}
\vspace{0.3cm}
\noindent\textbf{$\advB$ receives $(\6A, \6s \6A + \6e, \6h = \6e + \6f)$:}  
$\advA$ sees $\Hyb{\hidx-1}$ because
\begin{align*}
\6y_1 
&= (\6s'_1-\6s)\6A +\sum\nolimits_{\uidx = 2}^{\hidx-1} \left(\6s_\uidx \6A + \6e_\uidx -\6u_\uidx\right)+ \6h - \6e + {\cw}_1\\
&= \sum\nolimits_{\uidx\in[\hidx-1]} \left(\6s_\uidx\6A+\6e_\uidx - \6u_\uidx \right) +\cw_{1}  \\
\6y_\hidx 
&= \6s\6A+\6e+\cw_\hidx= \6s_\hidx\6A+\6e_\hidx+\cw_\hidx,
\end{align*}
where $\6f$, $\6e$, and $\6s$ are used as $\6e_1$, $\6e_\hidx$, and $\6s_\hidx$, respectively.
Moreover. $\6s_1$ is of the form $\tilde{\6s}_1 - \6s$.

\noindent\textbf{$\advB$ receives $(\6A,\6u,\6h=\6e+\6f)$:}
$\advA$ receives $\Hyb{\hidx}$ because
\begin{align*}
\6y_1 
&= \6s'_1 \6A 
    +\sum\nolimits_{\uidx = 2}^{\hidx-1} \left(\6s_\uidx \6A + \6e_\uidx -\6u_\uidx\right)+ \6h - \6u + {\cw}_1 \\
&= \sum\nolimits_{\uidx\in[\hidx]} \left(\6s_\uidx\6A+\6e_\uidx - \6u_\uidx \right) +\cw_{1},\\
\6y_\hidx &= \6u +\cw_\hidx =  \6u_\hidx +\cw_\hidx,
\end{align*}
where we used $\6s_1+\6s_\hidx = \6s'_1$ and $\6u = \6u_\hidx$.

\Cref{thm:KAHE_reduction} follows by the reduction from $\LPN_{\k,\plpn}$ to $\HintLPN_{\k,\p}^{\p}$ given in \Cref{thm:LPN_to_HintLPN}.
The proof of \Cref{thm:KAHE_solver} is deferred to the appendix.

\begin{figure}[t]
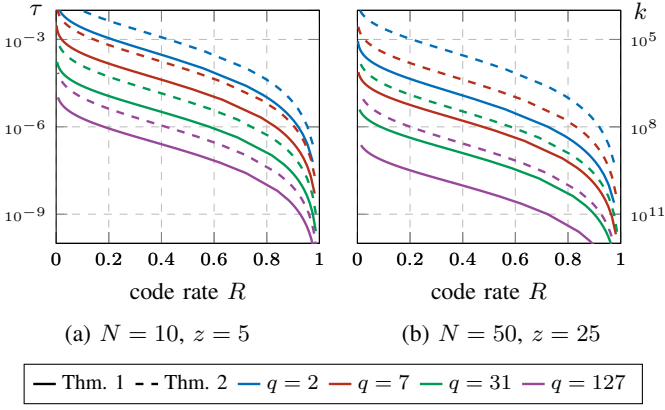

\vspace{-.55cm}
\centering
\begin{subfigure}{0.48\columnwidth}
    \centering     
    \input{plots/R_vs_tau_N10_z5}
    \caption{$\Nuser = 10$, $\coll = 5$}
    \label{fig:R_vs_tau_N50_z25}
\end{subfigure}%
\hfill
\begin{subfigure}{0.48\columnwidth}
    \centering        
    \input{plots/R_vs_tau_N50_z25}
    \caption{$\Nuser = 50$, $\coll = 25$}
    \label{fig:R_vs_tau_N50_z25}
\end{subfigure}\hspace*{0.1cm}%

\vspace{0.2cm}
\centering
\pgfplotslegendfromname{tradeoff_legend}
\caption{Varying $\pKAHE$ induces a trade-off between code rate $R$ and LPN noise rate $\plpn$; the corresponding LPN dimension $\k$ is shown for computation security $\lambda \geq 128$. \vspace{-.3cm}
}
\label{fig:R_vs_tau}
\end{figure}

\section{Performance Evaluation}\label{sec:performance}

\begin{figure}[t]
    \centering
    \input{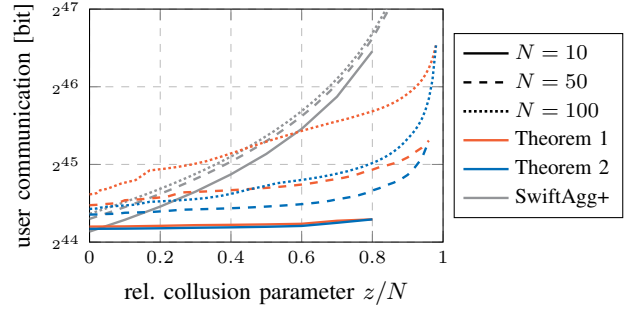}
    \caption{Communication cost per user as a function of the collusion parameter $\coll$, with $\kC = 10^{12}$ and $\qints = 2^{16}$. \vspace{-.3cm}}
    \label{fig:comm_kinfo1e9_x256}
\end{figure}

\Cref{fig:R_vs_tau} illustrates the trade-off between code rate $R$ and LPN parameters induced by varying $\pKAHE$.
Decreasing $\pKAHE$ reduces $\bigoplus_{\uidx\in[\Nuser]} \p$, allowing a higher code rate.
On the other hand, the LPN noise rate $\plpn$ decreases, which impacts the security as quantified in \Cref{thm:KAHE_reduction,thm:KAHE_solver} and, therefore, necessitates increasing the LPN dimension $\k$.
For $\coll < \Nuser -2$, a better trade-off is achieved when parameters are selected according to the concrete solver (Thm.~\ref{thm:KAHE_solver}) rather than the reduction (Thm.~\ref{thm:KAHE_reduction}).
Field size and the number of users also have a considerable impact, with smaller values being preferable.
Nevertheless, $\k > 10^5$ is required to obtain a rate $R$ that can allow for a communication-efficient protocol.
The communication overhead of a large $\k$ can be amortized  (cf. \Cref{rem:one_time}); however, by \Cref{prop:rate}, it implies a constant computational overhead per aggregated symbol at the server.

The LWE-based \KAHE of \cite{C:BGLRS25} provides a more favorable $R$-$\k$ trade-off.
This aligns with the general observation that the Euclidean metric offers better homomorphic properties compared to the Hamming metric (see, e.g., \cite{pietrzak2012cryptography,ITCS:AppAvrBrz15}).

Using the LPN-based \KAHE with a committee-based decryptor, the aggregation framework incurs a communication cost per user of
\vspace{-0.2cm}\[
\sum\nolimits_{\crtidx\in [\ncrt]} \left(\frac{\k_\crtidx\cdot \Ncmcrt}{\Ncmcrt-\coll} + \frac{\kC}{R_\crtidx}\right)\log_2\mleft(\crtmod[\crtidx]\mright)\ \text{bit}.
\]
This cost is optimized over the choices of $\crtmod[1],\ldots\crtmod[\ncrt]$ (cf.~\Cref{subsec:crt}) and $\k_1,\ldots,\k_{\ncrt}$ (cf.~\Cref{sec:lpn}).
\Cref{fig:comm_kinfo1e9_x256} shows that, for $\kC = 10^{10}$, $\qints = 2^{16}$ and $\Nuser\in\{10,50,100\}$, there exists a threshold $\coll^\star$ beyond which our scheme improves over the information-theoretically secure aggregation protocol of~\cite{jahani2022swiftagg+}.
$\coll^\star$ is reached earlier when the number of users is small or when the \KAHE is parametrized according to \Cref{thm:KAHE_solver}.

\section{Conclusion}\label{sec:conc}
We propose a key- and message-additive homomorphic encryption scheme based on the Learning Parity with Noise (LPN) assumption and prove its security via the Hint-LPN assumption, which we show to be equivalent to standard LPN. 
The scheme serves as the central component of a secure, post-quantum-resilient aggregation framework for which we propose a CRT-based optimization. 
Performance evaluation shows that our scheme can outperform information-theoretically secure aggregation in communication cost.
Future work may explore optimizing the computational efficiency by switching to an LPN variant with ring structure~\cite{C:BCGIKS20} or sparsity~\cite{EC:CHKV25}.

\balance 
\bibliographystyle{IEEEtran}
\bibliography{aux/abbrev3,aux/crypto,aux/refs,aux/refs_pa,aux/refs_pq}

\appendix
\section{Deferred Proofs}

\noindent\textbf{Proof of \Cref{lem:cover}.}
By \Cref{lem:piling}, $h$ and $e+f$ are identically distributed.
It remains to construct a distribution $\Dset_h$ such that, conditioned on $h$,
the distribution of $e'+t$ matches that of $e \mid e+f = h$.

\smallskip
\noindent\textbf{Case $h = 0$:}
We have
\[
\Prob{e\mid e+f = 0} =
\begin{cases}
\frac{(\q-1)(1-\p)^2}{(\q-1)(1-\p)^2 + \p^2} = 1-\p'&\text{for  } e = 0\\
\frac{\p^2}{(\q-1)^2 (1-\p)^2 + \p^2(\q-1)} = \frac{p'}{\q-1}&\text{otherwise,}\\
\end{cases}
\]
i.e., $e$ follows $\Bern{\p'}$ when conditioned on $e+f = 0$.
By the piling up lemma, $t\sim\Dset_0$ must be a Bernoulli distribution with success probability 
\[
\p_t = \frac{\q-1}{\q} - \frac{\q-1}{\q} \cdot \frac{1 - \frac{\q}{\q-1}\cdot \p'}{1 -\frac{\q}{\q-1}\cdot \plpn}.
\]
Note that $0 \leq \p_t \leq \p$ for $0\leq \plpn \leq \p'$.

\smallskip
\noindent\textbf{Case $h \neq 0$:}
For $h\neq 0$, we have
\[
\Prob{e\mid e+f = h} =
\begin{cases}
\frac{(\q-1)(1-\p)}{2(\q-1)(1-\p) + (\q-2)\p} = u&\text{for  } e \in\{0,h\}\\
\frac{\p}{2(\q-1)(1-\p) + (\q-2)\p} = v&\text{otherwise.}\\
\end{cases}
\]
By symmetry, $\Dset_h$ has the same shape.
More specifically, using (de-)convolution or characteristic functions, one find that $\Dset_h$ must be  such that $t\sim\Dset_h$ satisfies
\[
\Prob{t = x} =
\begin{cases}
\frac{(\q-1)u - \plpn}{(\q-1) - \plpn}&\text{for  } x \in\{0,h\}\\
\frac{(\q-1)v - \plpn}{(\q-1) - \plpn}&\text{otherwise.}\\
\end{cases}
\]
This defines a valid probability mass function for 
\[
\plpn \leq \frac{(\q-1)\p}{2(\q-1)(1-\p)+(\q-2)\p},
\]
which is always satisfied for $\plpn \leq \p'$ since $\p \leq \frac{\q-1}{\q}$. \qed

\medskip

\noindent\textbf{Proof of \Cref{thm:KAHE_solver}: Concrete Attack.}
In the following, we prove \Cref{thm:KAHE_solver} by providing an attack on the semantic security of the \KAHE, assuming that $\coll$ users collude with the attacker.
Without loss of generality, we assume $\6m_1,\ldots,\6m_{\Nuser-\coll} = \60$.
Then, for the LPN-based \KAHE given in \Cref{def:LPN_KAHE}, the real ciphertext distribution, as defined in \Cref{def:semantic_secure}, is given by
\[
\Dset_{\mKAHE} =
\left\{
\begin{aligned}
&(\6A, \bar{\6s} = \sum\nolimits_{\uidx\in[\Nuser-\coll]} \6s_\uidx, \6y_1,\ldots,\6y_{\Nuser-\coll}):\,\\
&\6s_\uidx \getrand \F_{\q}^{\k},\ \6e_\uidx \getrand \Bern{\pKAHE}^\n, \ \6y_\uidx = \6s_\uidx\6A + \6e_\uidx\\
\end{aligned}
\right\}.
\]
From a sample of this distribution, the sum of the individual errors can be computed as 
\[
\bar{\6e} = \sum_{\uidx\in[\Nuser-\coll]} \6e_\uidx = \sum_{\uidx\in[\Nuser-\coll]} \6y_\uidx -  \6s_\uidx \6A.
\]
By this argument, for $\Nuser - \coll = 2$, distinguishing $\Dset_{\mKAHE}$ and $\Dset_{\mRAND}$ is equivalent to distinguishing Hint-LPN (and this observation underlies the Proof of \Cref{thm:KAHE_reduction}).
The case of $\Nuser - \coll > 2$ corresponds to a generalization which combines $\Nuser-\coll$ instances and provides sums of secrets and errors.

We now turn a distinguisher $\advB$ for $\LPN_{\k,\plpn'}$ with noise rate
\[
\plpn' = \frac{\p\cdot\p'}{(\q-1)^2 (1-\p)(1-\p') + (\q-1) \p\cdot\p'}
\] 
into a distinguisher $\advA$ between $\Dset_{\mKAHE}$ and $\Dset_{\mRAND}$.

\vspace{0.3cm}
\algrenewcommand\algorithmicfunction{\textbf{Distinguisher}}
\begin{algorithmic}
\Function{$\advA$}{$\6A,\bar{\6s}, \6y_1,\ldots ,\6y_{\Nuser-\coll}$}
    \vspace{0.15cm}
    \State Set $\bar{\6e} = \sum_{\uidx\in[\Nuser-\coll]} \6y_\uidx -  \6s_\uidx \6A$
    \State Set $\Iset = \{\vidx \in[\n]: \bar{e}_\uidx = 0\}$
    \State Set $\6A' = (\6a_{\vidx})_{\vidx\in\Iset}$ with $\6A = (\6a_\vidx)_{\vidx\in[\n]}$
    \State Set $\6b' = (y_{1,\vidx})_{\vidx\in\Iset}$
    \vspace{0.15cm}
    \State \Return $\advB(\6A', \6b')$
\EndFunction
\end{algorithmic}
\vspace{0.3cm}

\noindent\textbf{$\advA$ receives $\Dset_{\mKAHE}$:}
$\advB$ sees LPN samples, i.e., $\6b' = \6s_1 \6A' + \6e'$ with $\6e' \getrand \Bern{\plpn'}^{\n'}$ as $\Prob{e_{1,\vidx}\mid\sum_{\uidx\in[\Nuser-\coll]} e_{\uidx,\vidx}=0} = \plpn'$.

\noindent\textbf{$\advA$ receives $\Dset_{\mRAND}$:} Since $\6y_1 \getrand \F_q^\n$, $\advB$ receives a $\6b' \getrand \F_q^{n'}$. \qed

\end{document}